\theoremstyle{plain}
\newtheorem{prop}{Proposition}
\theoremstyle{definition}
\newtheorem{ex}{Example}
\newtheorem*{rem*}{Remark}
\renewcommand{\rm}[1]{\mathrm{#1}}
\renewcommand{\bf}[1]{\mathbf{#1}}
\newcommand{\bb}[1]{\mathbb{#1}}
\newcommand{\cl}[1]{\mathcal{#1}}
\newcommand{\R}{\bb{R}}
\newcommand{\qmin}{q_c^{-}}
\newcommand{\qmax}{q_c^{+}}
\newcommand{\qmina}[1]{q_{#1}^{-}}
\newcommand{\qmaxa}[1]{q_{#1}^{+}}
\newcommand{\qin}{q^{\rm{in}}}
\newcommand{\qs}{q^{\rm{sh}}}
\newcommand{\C}{\cl{C}}
\renewcommand{\O}{\cl{O}}
\newcommand{\qtot}{q^{\rm{tot}}}
\newcommand{\qk}{q^{\rm{k}}}
\begin{document}
\title{Utilizing Load Shifting for Optimal Compressor Sequencing in Industrial Refrigeration}

\author{Rohit Konda, Vikas Chandan, Jesse Crossno, Blake Pollard, Dan Walsh, Rick Bohonek, and Jason R. Marden \thanks{R. Konda (\texttt{rkonda@ucsb.edu}) and J. R. Marden (\texttt{jrmarden@ece.ucsb.edu})are with the Department of Electrical and Computer Engineering at the University of California, Santa Barbara, CA. Vikas Chandan \texttt{vikas@crossnokaye.com}, Jesse Crossno \texttt{crossno@crossnokaye.com}, Blake Pollard \texttt{blake@crossnokaye.com}, and Dan Walsh \texttt{dan@crossnokaye.com} are with CrossnoKaye\textregistered . Rick Bohonek \texttt{rbohonek@butterball.com} is with Butterball LLC\textregistered. This work is supported by funding from CrossnoKaye.}}

\maketitle
\thispagestyle{empty}

\begin{abstract}
The ubiquity and energy needs of industrial refrigeration has prompted several research studies investigating various control opportunities for reducing energy demand. This work focuses on one such opportunity, termed \emph{compressor sequencing}, which entails intelligently selecting the operational state of the compressors to service the required refrigeration load with the least possible work. We first study the static compressor sequencing problem and observe that deriving the optimal compressor operational state is computationally challenging and can vary dramatically based on the refrigeration load. Thus we introduce load shifting in conjunction with compressor sequencing, which entails strategically precooling the facility to allow for more efficient compressor operation. Interestingly, we show that load shifting not only provides benefits in computing the optimal compressor operational state, but also can lead to significant energy savings. Our results are based on and compared to real-world sensor data from an operating industrial refrigeration site of Butterball LLC\textregistered located in Huntsville, AR, which demonstrated that without load shifting, even optimal compressor operation results in compressors often running at intermediate capacity levels, which can lead to inefficiencies. Through collected data, we demonstrate that a load shifting approach for compressor sequencing has the potential to reduce energy use of the compressors up to $20\%$ compared to optimal sequencing without load shifting.
\end{abstract}

\section{Introduction}
\label{sec:int}

Industrial refrigeration systems are present in a multitude of sectors, not limited to food processing, plastics, electronics, and chemical processing \cite{sun2022comprehensive, fabrega2010exergetic, stoecker1998industrial}. Altogether, industrial refrigeration accounts for approximately $8.4\%$ of total energy usage in the U.S \cite{eiareport}. As such, there are tremendous energy saving opportunities available in industrial refrigeration, not only through updating hardware components, but also increasing the sophistication of the implemented control algorithms. Algorithmic improvements are potentially more enticing, as they can realize significant energy savings with minimal capital expenditures to retrofit the system. 

The four central components of a prototypical refrigeration system include the evaporators, compressors, condensers, and the expansion valve, with interconnections as illustrated in Figure \ref{fig:ref}. We study a common industrial refrigeration process, where ammonia refrigerant is circulated in a closed loop in a vapor compression cycle to move heat against the thermal gradient of the system. Informally, the main thermodynamic steps in an ideal vapor compression cycle are summarized as follows\footnote{The actualized refrigeration process deviates from this idealization considerably, but we simplify for the purpose of presentation.}:

\vspace{5pt}
$(1 \to 2)$ The refrigerant vapor flows through a compressor, where it is compressed from a low pressure, referred to as suction pressure, to a high pressure, referred to as discharge pressure.  A consequence of this compression is an increase in the temperature of the refrigerant vapor, which now takes the form of a super-heated vapor.

$(2 \to 3)$ The refrigerant super-heated vapor is then fed to the condenser, where constant pressure heat rejection occurs, and heat is released to the ambient environment, resulting in condensation of the ammonia. A consequence of this heat rejection is that the refrigerant transitions from a super-heated vapor to a saturated liquid. 

$(3 \to 4)$ The refrigerant is then expanded adiabatically across an expansion valve, reducing the temperature and pressure and resulting in a vapor-liquid mixture.

$(4 \to 1)$ Cooled refrigerant liquid flows through the evaporator, where heat absorption from the system via evaporation of the refrigerant occurs, and super-heated vapor is fed back to the compressor, completing the cycle.

\vspace{5pt}

\noindent We direct the interested reader to \cite{stoecker1998industrial} for a comprehensive review of refrigeration systems.  

\begin{figure}[h!]
    \centering
    \includegraphics[width=150pt]{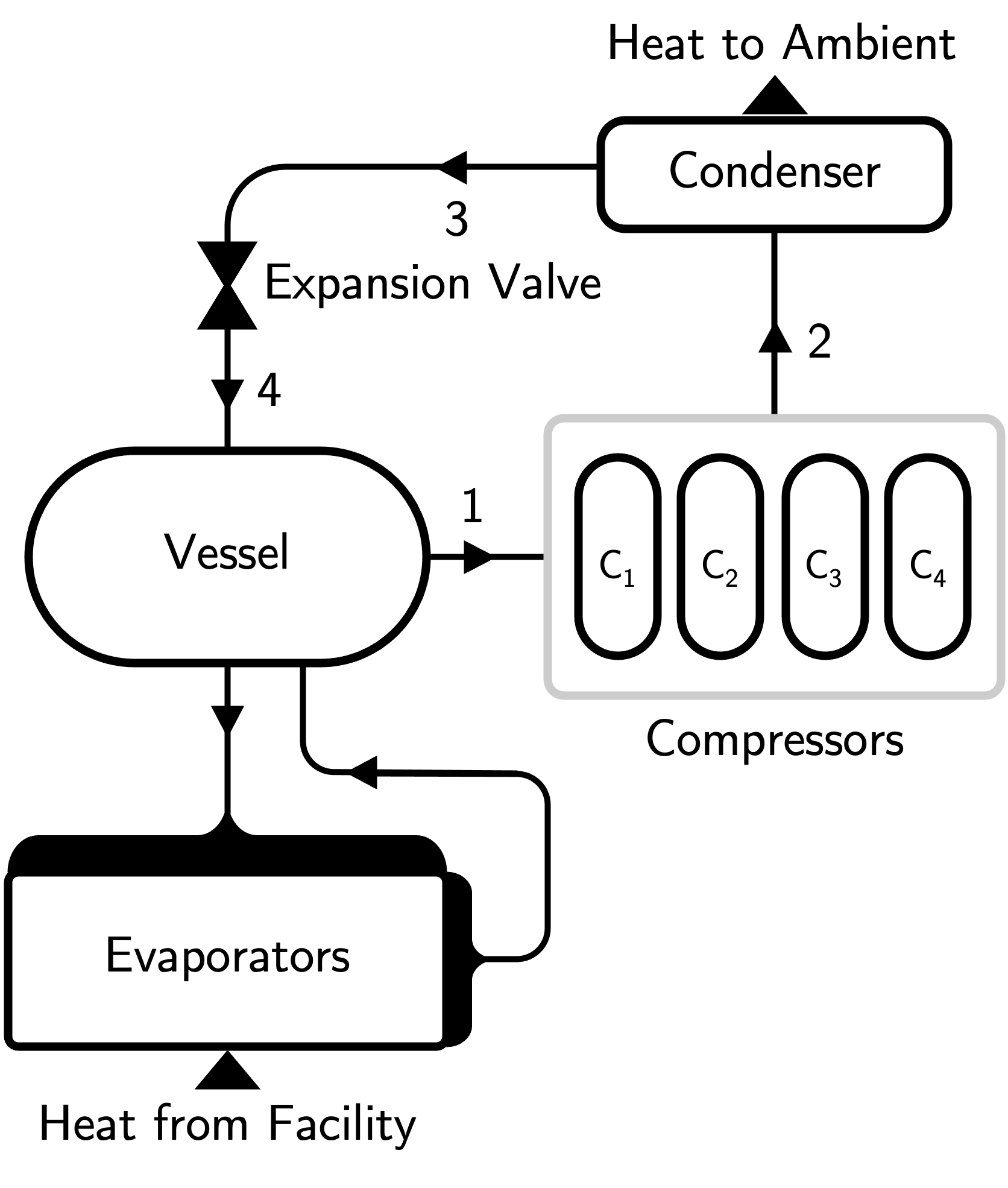}
    \caption{A simplified diagram of the refrigeration components are depicted showing the flow of ammonia through the vapor compression process.}
    \label{fig:ref}
\end{figure}

\begin{figure*}[ht]
    \centering
    \includegraphics[width=500pt]{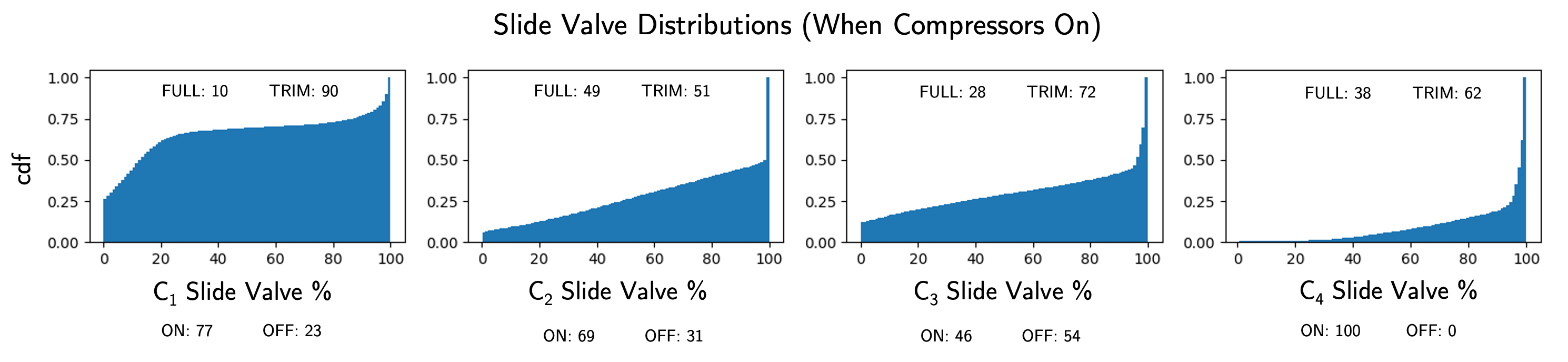}
    \caption{This figure highlights the cumulative distribution functions for the slide valve position for four compressors operating at the Butterball facility during the month of June, 2023. Here, the slide valve position is associated with compressor capacity, where $100\%$ means that the compressor is running at full capacity. We also highlight the percentages in which each compressor is operating at full capacity (where the slide valve sensor is measured above $99\%$) or trim as well as the percentage of time the compressor is turned on and off. Note that the compressors are often operating in trim, suggesting that there are potential opportunities to save energy by operating the compressors at full capacity more often.}
    \label{fig:sv1}
\end{figure*}

The configuration and operation of the entire refrigeration system can have significant impacts on the cost of operation; this can either be measured through total power, electric cost, carbon emissions, etc. Infrastructural retrofits of the refrigeration system, including changing the choice of refrigerant, hardware specifications of components, or general system layout, can be typically costly to implement. Accordingly, a more viable way to reduce costs is to strategically adjust the control policies of the refrigeration components to meet the required heat extraction while minimizing the operational cost. For example, \emph{thermal load shifting} has received significant attention as a methodology to preemptively cool a facility in order to take financial advantage of dynamic energy cost-rate structures \cite{sun2013peak, braun2003load, afram2014theory, yao2021state, pattison2016optimal, pattison2017moving, vishwanath2019iot}. Additionally, another approach is \emph{set point optimization}, where the set points for suction and discharge pressure are dynamically adjusted to drive the refrigeration system towards an energy optimal operating state while maintaining the desired achievable cooling demands \cite{larsen2003control, manske2000performance, larsen2006model}. For these domains, standard control and optimization techniques, such as model predictive control and dynamic programming, can be implemented for attaining near-optimal control strategies.

While most of the existing control approaches for energy optimization focus on the evaporators, e.g., thermal load shifting, it is important to highlight that the compressors represent the dominant energy expenditure in most refrigeration systems, typically accounting for more than $85\%$ of the total energy usage of the refrigeration process. For example, Figure \ref{fig:breakdown} highlights the approximate breakdown of the energy demand associated with an industrial refrigeration site of Butterball, which is large poultry processing facility, during the month of June, 2023. It is widely known that compressors are operated most efficiently when running at full capacity \cite{reindl2013sequencing, manske2001evaporative}; however, the typical control objective for the compressors is suction pressure stabilization. In this way, the operational state of the compressors is directly dependent on the state of the evaporators and this can ultimately lead to the compressors operating in an inefficient manner, i.e., at partial capacity. Figure~\ref{fig:sv1} confirms this phenomena, directly highlighting the cumulative distribution functions for the slide valve position of the four compressors operating at the Butterball facility during this time period.\footnote{This data was acquired through direct partnership with CrossnoKaye (see \url{crossnokaye.com}), which focuses on the derivation and implementation of intelligent control systems for industrial refrigeration systems in the cold food and beverage domain.  CrossnoKaye has been monitoring and controlling the refrigeration system at the Butterball facility since April 2023}  Here, the slide valve position can be viewed in the same light as capacity, where $100\%$ means that the compressor is running at full capacity. Note that the slide valve positions are often significantly below $100\%$, suggesting that there are potential energy saving opportunities in algorithmic improvements for compressor scheduling and control.

\begin{figure}[ht]
    \centering
    \includegraphics[width=250pt]{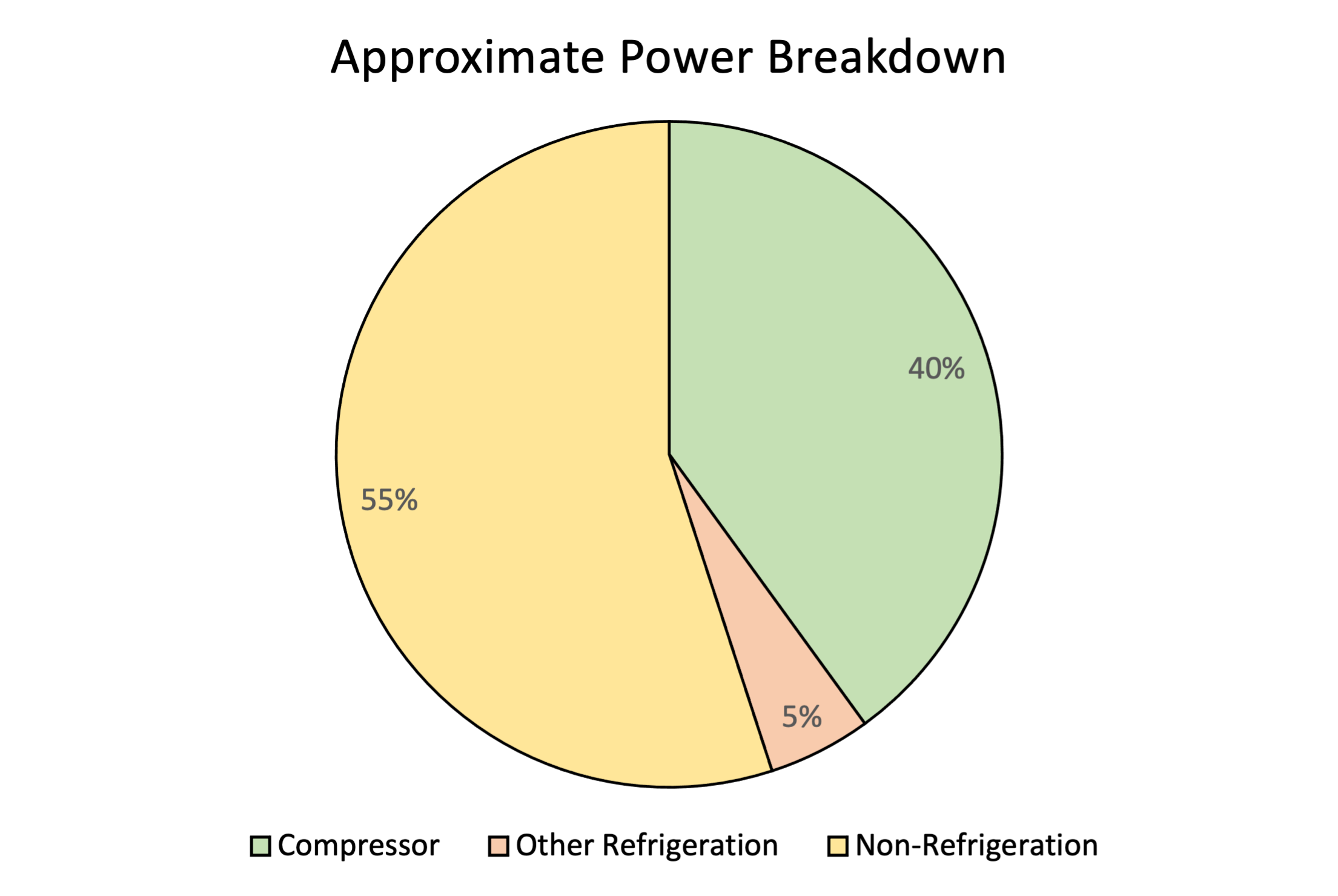}
    \caption{Approximate power breakdown for a Butterball facility at Huntsville.}
    \label{fig:breakdown}`
\end{figure}

This paper shifts the control focus from the evaporators directly to the compressors, where the goal is to optimize the operational state of the compressors to serve the required refrigeration load. We formalize this optimization problem as the \emph{compressor sequencing problem} \cite{reindl2013sequencing}. We begin by characterizing the optimal solution to the \emph{static} compressor sequencing problem, which focuses on satisfying a given refrigeration load at a single time instance. Interestingly, we show that the optimal solution can be derived via a simple water-filling algorithm. However, this water-filling algorithm must be executed with the correct compressor sequence, which is dependent on the refrigeration load and can be difficult to compute, to recover the optimal solution. We note that the optimal compressor sequence may also vary drastically as a function of the refrigeration load, hence creating significant issues from both an implementation and computation perspective.

Given these limitations, our second result shifts attention from the static compressor sequencing problem to the \emph{dynamic} compressor sequencing problem. Here, we are provided with a given time-dependent profile of the refrigeration load that we need to serve over a given horizon. See forthcoming Figure \ref{fig:heat} as an illustration of a typical refrigeration load over a month long horizon at the Butterball facility.  Unlike the static problem, this dynamic formulation gives us the flexibility to exploit load shifting, where one preemptively cools the facility (i.e., services future loads), so that the compressors can operate in a more efficient fashion, i.e., more often at full capacity. Our main results characterize the optimal solution of this dynamic compressor sequencing problem.  Not only does load shifting provide substantial potential for energy savings, interestingly, the resulting optimal solution is more implementable and tractable than the optimal solution of the static problem. In particular, the optimal solution is the result of water-filling algorithm at each instance, where now the order of the compressors is fixed and can be set a priori.  This ensures that the resulting behavior will be far more predictable and stable than just implementing the solution of the optimal static compressor sequencing each period, thus making it more amenable to real world implementation.

In order to practically realize the energy saving opportunities associated with compressor sequencing, we implement a numerical case study on the operation of the  Butterball facility in Section \ref{sec:case}. The basis of this numerical study was from collected time-series data on compressor configurations and refrigeration load estimates. Using these refrigeration load estimates, our initial results suggest that the potential energy savings could be significant, with upwards of $20\%$ reduction in total energy expenditure when comparing optimal compressor sequencing with load shifting to optimal compressor sequencing without load shifting. Furthermore, this paper provides a number of supporting results characterizing properties of the optimal and near-optimal online load shifting algorithms.  

While we introduce load shifting as a novel mechanism to address the compressor sequencing problem, practical implementations of the proposed algorithms would require additional practical considerations. Increasing or decreasing the capacities of the compressors carelessly can result in undesirable swings in suction and discharge pressure and lead to system instability. Classically, evaporators handle thermal regulation while the compressors handle pressure regulation. One possible way to implement load shifting is to directly control compressor operation while having the evaporators handle pressure regulation, either by turning additional evaporators on/off or adjusting the fan speed. Nevertheless, this work is merely gauged at assessing the potential benefits of  load shifting for compressor sequencing with the realization of practical control strategies to be relegated to future work.

\section{Optimal Compressor Sequencing}
\label{sec:results}

In this section, we formalize the control problem for optimal compressor sequencing. Here, the operational state of the compressors (e.g. the on/off status as well as the slide valve position) is chosen such that the thermal demands are met with the least cost, which we measure in terms of energy usage. We discuss potential opportunities for algorithmic improvements in this section.

\subsection{Preliminaries}

Many large scale refrigeration systems employ algorithms for intelligently choosing the operational state of the compressors. In refrigeration systems with multiple compressors, one must decide the operational state of these compressors that is necessary to service the underlying refrigeration load. More formally, let $\C$ denote a finite set of compressors (for Butterball, $\C = \{C_1, C_2, C_3, C_4\}$), where each compressor $c \in \C$ is associated with a minimum and maximum heat capacity, $\qmin$ and $\qmax$ respectively, as well as a power-heat curve $P_c: Q_c \to \R_{\geq 0}$ where $Q_c = 0 \cup [\qmin, \qmax]$ designates the viable refrigeration loads on compressor $c$, with $0$ indicating the compressor is turned off. Here, $P_c(q_c) \geq 0$ is the power required to serve heating load $q_c \in Q_c$ through compressor $c$. We assume that $P_c(0) = 0$ and $P_c$ is concave and increasing over the interval $[\qmin, \qmax]$, which implies that compressors operate more efficiently at higher capacities. Specifically for the compressors in operation at Butterball,  we assume an affine structure for the power-heat curves, where for any compressor $c \in {\cal C}$ and thermal load $q_c \in [\qmin, \qmax]$ we have

\begin{equation*}
P_c(q_c) = P_c(\qmin) + \left(\frac{q_c - \qmin}{\qmax-\qmin} \right) \left(P_c(\qmax)-P_c(\qmin)\right).
\end{equation*}

We validate the affine models for the power-heat curves against collected data on estimated refrigeration load and compressor power, which is shown in Figure \ref{fig:PH}. The extreme points of these power-heat curves is summarized in Table~\ref{tab:comp}.
\begin{table}[h!]
\centering
 \begin{tabular}{||c c c c c||} 
 \hline
  Compressor & C1 & C2 & C3 & C4 \\ [0.5ex] 
 \hline\hline
 Model & Screw & Screw & Screw & Screw \\ [1ex] 
 $\qmin$ (kW) & 220 & 239 & 165 & 284 \\ 
 $\qmax$ (kW) & 3000 & 2126 & 1760 & 2351 \\
 $P(\qmin)$ (kW) & 124 & 173 & 142 & 181 \\
 $P(\qmax)$ (kW) & 262 & 427 & 356 & 494 \\
 \hline
 \end{tabular}
 \caption{Compressor Characteristics}
 \label{tab:comp}
\end{table}
Note that for simplicity, we removed the dependence on slide valve position to provide a direct relationship between thermal load and power.

\begin{figure*}[t!]
    \centering
    \includegraphics[width=500pt]{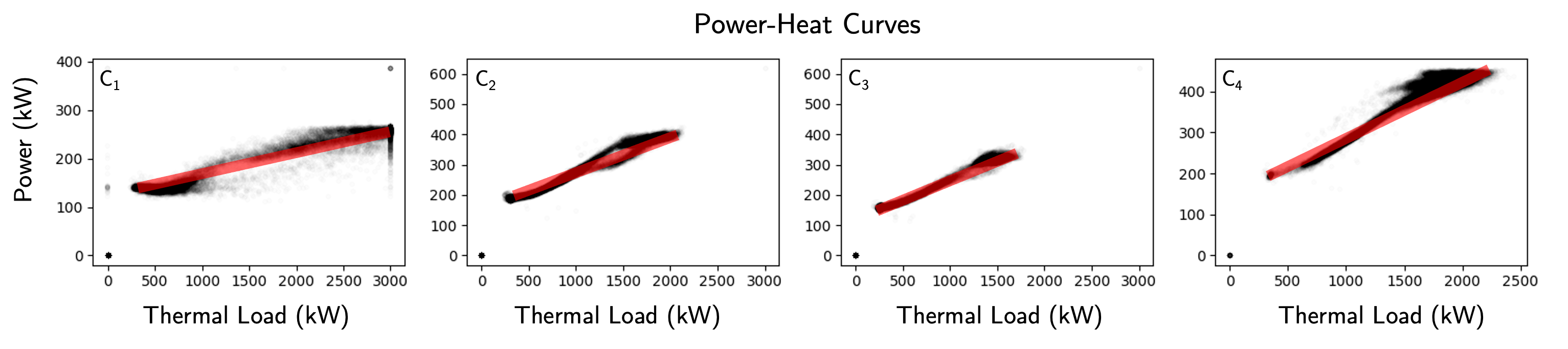}
    \caption{For each of the compressors, the estimated power and heat capacity for each minute in the month of June was recorded for a Butterball facility. We depict the resulting spread in the given figure and notice a fairly affine relationship, which we denote in red. This is also supported from manufacturing simulation software for the compressors.}
    \label{fig:PH}
\end{figure*}

The problem of compressor sequencing centers on the goal of meeting the incoming refrigeration load, which we denote by $\qin \in \R_{\geq 0}$, with the least possible energy expenditure.  More formally,  the goal is to identify compressor loads $\{q_c\}_{c \in \C}$ that satisfy the incoming refrigeration load, i.e., $\sum_{c \in \C} q_c \geq \qin$, and minimize the total work expenditure as measured by the total power usage by the compressors, i.e., $\sum_{c \in {\cal C}} P_c(q_c)$.  We denote this compressor assignment by the policy $\pi : \R_{\geq 0} \rightarrow \prod_{c \in {\cal C}} Q_c$, where $\pi(\qin) = \{q_c\}_{c \in {\cal C}}$ designates the refrigeration loads for each compressor $c \in \C$. The cost of a policy $\pi$ for a given $\qin$ is defined by $J_{\pi}(\qin) = \sum_{c \in \C} P_c(\pi_c(\qin))$. We will henceforth remove the dependence on the compressor set, i.e., denote $\{\cdot\}_{c \in {\cal C}}$ as merely $\{\cdot\}$, for notational simplicity.

\subsection{Fixed Order Compressor Sequencing}
\label{subsec:fixed}

The industrial standard for compressor operation is to meet a given refrigeration load $\qin$ through a water filling algorithm with a pre-determined order of  compressors $\C$. For ease of presentation, let the set of compressors $\C = \{c_1, c_2, \dots, c_m\}$ naturally denote the order of the compressors, i.e., $c_1$ first, $c_2$ second, etc. Then the operation of the compressors according to this policy, represented by $\pi^{\rm{FO}}$, is given by Algorithm \ref{alg:water}. 
\begin{algorithm}
\caption{Water Filling Algorithm}\label{alg:water}
\begin{algorithmic}
\Require $\O$, $\qin$, $\qtot \gets 0$, $q_c \gets 0$ for all $c \in \C$
\For{$c$ in $\O$}
\If{$\qin > \qtot$}
\State $q_c \gets \qmax$
\State $\qtot \gets \qtot + \qmax$
\EndIf
\EndFor
\For{$c$ in $\rm{reverse}(\O)$}
\If{$\qin \leq \qtot$ and $q_c \neq 0$}
\State $d \gets \min\{\qmax - \qmin, \qtot - \qin\}$
\State $q_c \gets q_c - d$
\State $\qtot \gets \qtot - d$
\EndIf
\EndFor
\State \Return $\{q_c\}_{c \in \C}$
\end{algorithmic}
\end{algorithm}
We will denote the fixed order policy as $\pi^{\rm{FO}}(\qin; \O)$, where $\O$ describes a specific ordering of the compressor set $\C$. Note that Algorithm \ref{alg:water} returns a thermal load profile $\{q_c\}$ that is guaranteed to satisfy the inequality $\sum_{c \in \C} q_c \geq \qin$ provided that we assume that $\min_c \qmin \leq \qin \leq \sum_{c \in \C} \qmax$. When $q_c = \qmax$, we say that the compressor is operating at full capacity. Alternatively, when $\qmax > q_c \geq \qmin$, we say that the compressor is operating in trim. A numerical example of Algorithm \ref{alg:water} is shown below.

\begin{ex}
Consider the set of compressors $\C$ associated with Butterball with order and characteristics following from Table \ref{tab:comp}. Let Algorithm \ref{alg:water} be run for an incoming refrigeration load $\qin = 3100$. Since $\qin > \qmaxa{1} = 3000$, we first set $q_1 \gets \qmaxa{1}$ and $q_2 \gets \qmaxa{2} = 2126$ according to the first loop in Algorithm \ref{alg:water}. Since $3000 + 2126 > 3100$, we run the second loop to scale back the compressor loads. First we set $q_2 \gets \qmina{2} = 239$, resulting in a total load of $3239 > 3100$. Lastly, to match $\qin$, we scale back $q_1 \gets 3000 - 139$ to result in compressor loads $q_1 = 2861$, $q_2 = 239$, $q_3 = 0$, and $q_4 = 0$.
\end{ex}

The central tuning parameter of the policy $\pi^{\rm{FO}}(\qin; \O)$ is the order $\O$ that the water filling algorithm is run under. There may be significant differences in energy usage for different orders, especially if compressors vary in the efficiency.  This phenomenon is displayed in Figure \ref{fig:diff}, where we see huge gap in energy usage between the best and worst compressor sequence when considering the four compressors in operation at Butterball.

\begin{figure}[ht]
    \centering
    \includegraphics[width=240pt]{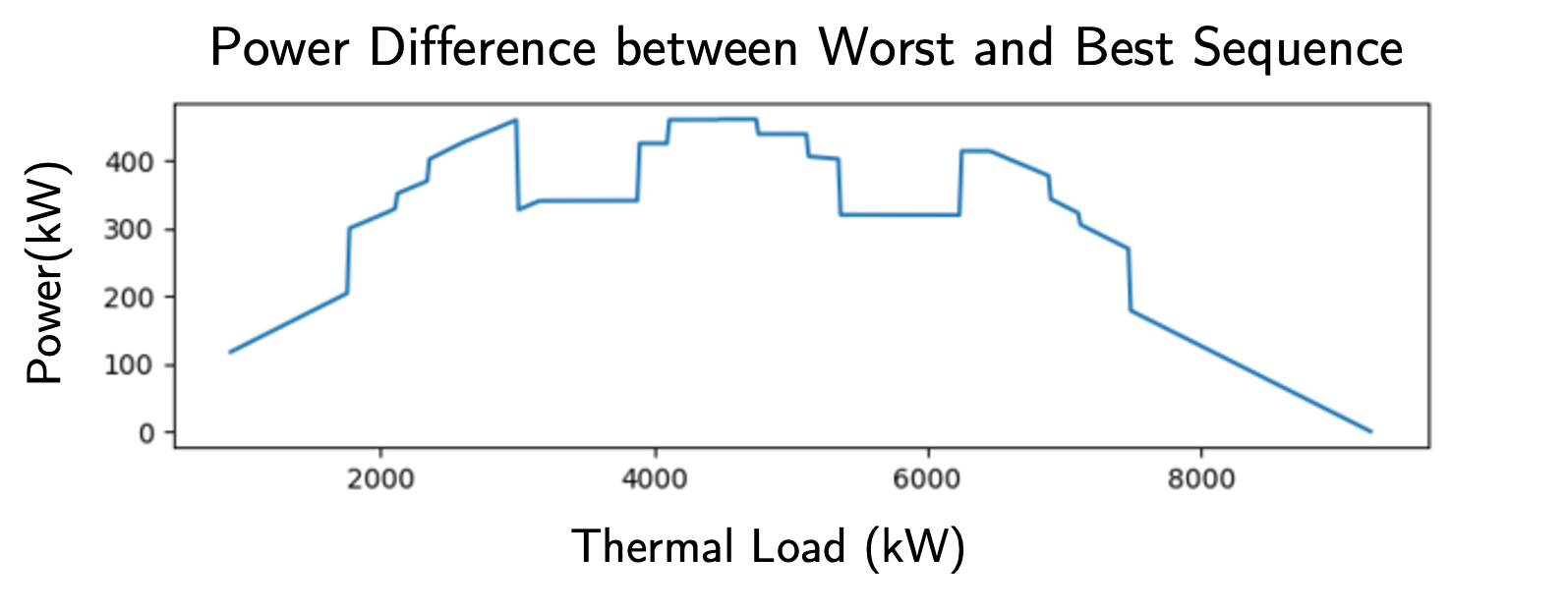}
    \caption{In this figure, for each possible refrigeration load, we calculate the power usage between water filling using the best and worst order of the four compressors in operation at Butterball. Notice that there can be potentially a large gap between power usage between the different orders.}
    \label{fig:diff}
\end{figure}

Note that for intermediate thermal loads, the efficiency loss according to a fixed-order water filling algorithm could be in upwards of $49\%$ (see Table \ref{table:main}).

\subsection{Optimal Compressor Sequencing}
\label{subsec:cs}

While the standard water-filling algorithms provide a straightforward approach to compressor sequencing, moving away from a fixed order scheme may lead to more efficient compressor operation. Hence, we consider the problem of optimal compressor sequencing in this section, where the goal is to determine the compressor state that meets the refrigeration load with the least possible energy expenditure. More formally, the operation of the compressors would be determined by the solution of the following non-convex optimization problem.
\begin{equation}
\label{opt-cs}
\begin{aligned}
J^*(\qin) &= \min_{q_c \in Q_c} \sum_{c \in \C} P_c(q_c) \\
&\text{s.t. }   \sum_{c \in \C} q_c \geq \qin
\end{aligned}    
\end{equation}

The following proposition characterizes the structure of optimal solution to the compressor sequencing problem in Eq. \eqref{opt-cs}. In fact, regardless of the refrigeration load $\qin$, the optimal compressor state can be realized by a water-filling algorithm with a specific compressor order that depends on $\qin$. 

\begin{prop}
\label{prop:noload}
Let $\qin$ be the incoming refrigeration load. The optimal compressor state, as given by the solution of Eq. \eqref{opt-cs}, can be realized by the water filling algorithm given in Algorithm \ref{alg:water} with a specific order $\O$ that depends on $\qin$.
\end{prop}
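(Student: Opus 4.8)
The plan is to first determine the shape of an optimal solution of Eq.~\eqref{opt-cs} and then exhibit an order under which Algorithm~\ref{alg:water} reproduces it. The starting observation is that, for \emph{any} order $\O$, the output of water filling has a rigid four-block form: reading along $\O$ it is a block of compressors at full capacity $\qmax$, then at most one compressor strictly inside $(\qmin,\qmax)$, then a block of compressors at exactly $\qmin$, then a block of off compressors. So it suffices to produce an optimizer of this form together with a compatible order. The first step is to show that some optimizer has at most one strictly-interior compressor. This is a concavity/vertex argument: if compressors $c$ and $c'$ were both strictly interior, then shifting load between them keeps $\sum q_c$ fixed while the cost $\delta \mapsto P_c(q_c+\delta)+P_{c'}(q_{c'}-\delta)$ is concave in $\delta$ and hence minimized at an endpoint of the feasible shift, where one of $c,c'$ reaches $\qmin$ or $\qmax$; iterating leaves at most one on-compressor strictly interior, every other on-compressor sitting at $\qmin$ or $\qmax$, without raising the cost.

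Fix such an optimizer and let $F$, $\{t\}$, $M$, $Z$ be the compressors at full, strictly in trim, at $\qmin$, and off, respectively. I would propose the order $\O=[F,\,t,\,M,\,Z]$, with the internal order of $M$ chosen so that a particular compressor $m^\star\in M$ (identified below) is placed last, and then verify by a direct trace that Algorithm~\ref{alg:water} returns this exact profile. The reduction (reverse) loop is the easy half: off compressors are skipped, each compressor of $M$ is pulled down to $\qmin$ — at every such step the residual excess $\qtot-\qin$ still dominates that compressor's decrement $\qmax-\qmin$, since every such decrement is one summand of the total excess — then $t$ is pulled down by exactly the leftover excess, which by construction returns it to its optimal value, leaving the compressors of $F$ untouched at full.

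The delicate half, and what I expect to be the crux, is the filling (forward) loop: it turns compressors to full only while $\qin>\qtot$ and stops the instant the running total reaches $\qin$, so I must guarantee it does not halt before switching on every compressor of $M$. Writing $E=\big(\sum_{c\in F\cup\{t\}\cup M}\qmax\big)-\qin$ for the excess-if-all-full, the loop runs through all of $M$ precisely when the compressor placed last, $m^\star$, satisfies $\qmaxa{m^\star}>E$ — equivalently, when $m^\star$ is \emph{pivotal}, i.e. removing it would drop the full-capacity sum of the remaining on-compressors below $\qin$. The core lemma is that optimality forces such a pivotal compressor to exist. I would take $m^\star$ to be the compressor in $M$ with the largest marginal cost $P_{m^\star}'(\qmina{m^\star})$ at its minimum and argue by contradiction: if $\qmaxa{m^\star}\le E$, the slack on $F\cup\{t\}\cup(M\setminus\{m^\star\})$ is at least $\qmina{m^\star}$, and — using the pairwise optimality relations $P_m'(\qmina{m})\ge P_t'(q_t)$ obtained by shifting load between each min compressor and $t$, together with concavity — every unit of that slack absorbs load at marginal cost at most $P_{m^\star}'(\qmina{m^\star})$. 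Switching $m^\star$ off and rerouting its $\qmina{m^\star}$ units over this slack then costs at most $\qmina{m^\star}\,P_{m^\star}'(\qmina{m^\star})$ yet removes $P_{m^\star}(\qmina{m^\star})$, strictly lowering the total and contradicting optimality.

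Two loose ends remain. First, the contradiction needs $P_{m^\star}(\qmina{m^\star})\ge \qmina{m^\star}\,P_{m^\star}'(\qmina{m^\star})$ with strict inequality; this is automatic for the affine-with-overhead curves of Table~\ref{tab:comp} (the positive intercept makes it strict) and is the property I would isolate as a hypothesis for the fully general concave case. Second, I would dispatch the degenerate configurations — $M$ empty, no trim compressor, or a load $\qin$ landing exactly on a breakpoint so the distinguished compressor sits at $\qmin$ — either directly, since the construction only simplifies, or by a continuity argument in $\qin$, noting that a generic load produces a unique strictly-interior compressor.
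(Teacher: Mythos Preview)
Your concavity argument showing that at most one on-compressor sits strictly between $\qmin$ and $\qmax$ is exactly the paper's. Where you diverge is in the second half. The paper's proof simply asserts, in one sentence, that ordering the compressors by decreasing $q^*_c$ and running Algorithm~\ref{alg:water} reproduces the optimum; it does not trace either loop or address whether the forward pass might halt before switching on every compressor destined for $\qmin$. You, by contrast, decompose the optimizer into the four blocks $F,\{t\},M,Z$, verify the reverse loop explicitly, isolate the forward-loop halting issue, and resolve it via a pivotality lemma: if the compressor $m^\star\in M$ with largest $P'_{m^\star}(\qmina{m^\star})$ were not pivotal, switching it off and redistributing its load over the remaining slack would strictly lower the cost, contradicting optimality. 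That lemma, and the affine-overhead inequality $P_{m^\star}(\qmina{m^\star})>\qmina{m^\star}\,P'_{m^\star}(\qmina{m^\star})$ it relies on, have no analogue in the paper. So the high-level strategy is shared, but your argument supplies the rigor the paper omits on exactly the step it hand-waves; the paper's version buys brevity, yours buys an actual verification that some order makes Algorithm~\ref{alg:water} output the optimizer.
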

\begin{proof}
To show this statement, we show equivalently that for all the compressors, only one compressor $c \in \C$ has $\qmax > q^*_c > \qmin$ in the optimal solution to Eq. \eqref{opt-cs}. Note that if this is true, the optimal order $\O$ coincides to when the compressors are ordered decreasing in their capacities $\{q^*_c\}$. Given this order, Algorithm \ref{alg:water} will produce the equivalent capacity $\{q^{\rm{alg}}_c\} = \{q^*_c\}$ that match the optimal solution.

We show the claim that at most one $q^*_c \notin \{\qmin, \qmax, 0\}$ is not at the endpoints through contradiction. Let $i$ and $j$ be the compressors at partial capacity in the optimal solution. Notice that the function $P_i(q_i + d) + P_j(q_j - d)$ is a concave function of $d$, by assumption of concavity of $P_i$ and $P_j$ and preservation of concavity under affine transformations. For any feasible $d$, note that the constraint in Eq. \eqref{opt-cs} is always satisfied if we change $q_i$ to $q_i + d$ and $q_j$ to $q_j - d$. Additionally, since $P_i(q_i + d) + P_j(q_j - d)$ is concave, the optimal value for $d$ must occur at either endpoints of the feasible interval. Thus in the optimal solution, $q_i$ or $q_j$ must be either at $\qmin$ or $\qmax$ or $0$, ensuing in contradiction.
\end{proof}

This structural result demonstrates that the optimal control algorithm for compressor sequencing could be realized from the perspective of a partitioning process where the admissible refrigeration loads are partitioned into various regions, and each region is associated with a distinct ordering as shown in Figure \ref{fig:order}. However, this approach to compressor sequencing has significant problems from both a computation and implementation perspective. First, solving the optimization problem in Eq. \eqref{opt-cs} represents a mixed integer optimization that grows exponentially in complexity in the size of the compressor set $\C$. Furthermore, observe that the optimal order can change drastically as a function of the refrigeration load as highlighted in Figure \ref{fig:order}. This means that the state of the compressors could shift wildly during operation, which may be infeasible due to delays in changing compressor capacities and causing unnecessary variability in compressor operation. Thus, we look to load shifting as a medium to smooth out the compressor sequencing problem - we discuss this in the next section.

\begin{figure}[ht]
    \centering
    \includegraphics[width=240pt]{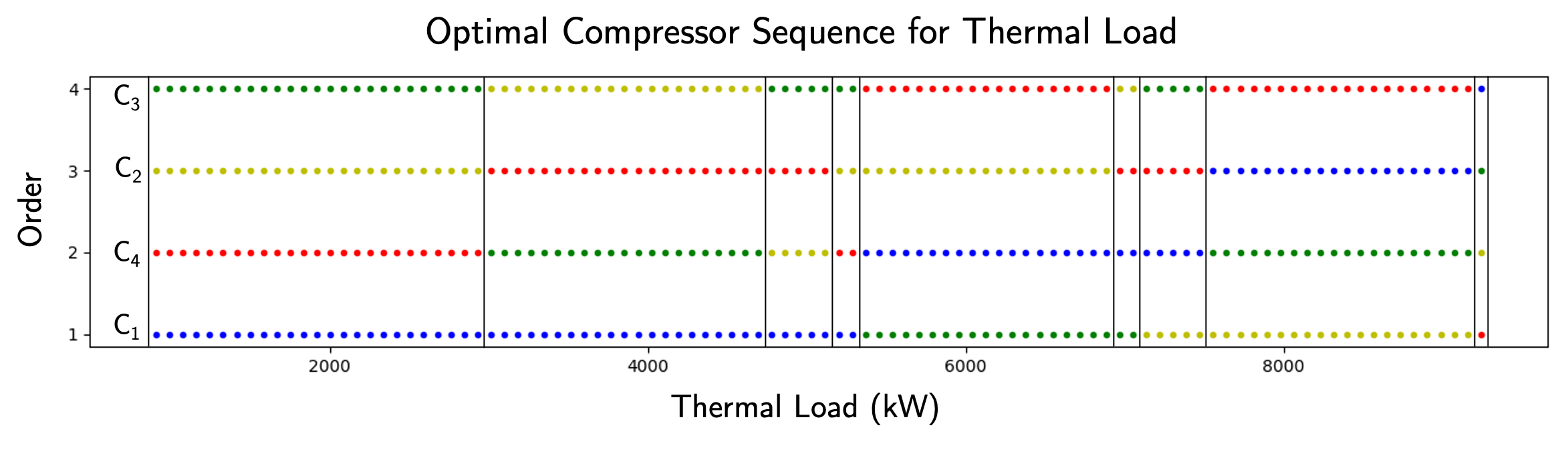}
    \caption{We depict the optimal compressor sequence for the four compressors in Butterball. For each given refrigeration load $\qin$, we pictorially describe the optimal order where each compressor is associated with a unique color and the height determines its placement in the order.}
    \label{fig:order}
\end{figure}

\subsection{Optimal Compressor Sequencing with Load Shifting}

Load shifting is common practice in refrigeration systems for reducing operational costs. Load shifting involves the process of preemptively cooling a facility, thereby using the product within the facility as a thermal battery to save on future cooling demands. Accordingly, for this setting we will think about cooling needs over a given discrete horizon $[0, 1, \dots, T]$ where the refrigeration load at each stage $k$ is given by $\qin(k)$ and $\bf{\qin} = \{\qin(k)\}_{0 \leq k \leq T}$. Here, we will assume that there is complete knowledge of the refrigeration load over the horizon at the initial stage $k=0$. This assumption will allow us to hypothetically assess the potential opportunities associated with load shifting for compressor sequencing on realistic refrigeration load profiles as provided in forthcoming Figure \ref{fig:heat}, which highlights the refrigeration load over the month of June, 2023.

The goal of optimal compressor sequencing is to establish a new shifted thermal load demand trajectory $\bf{\qs} = \{\qs(0), \dots, \qs(T)\}$ and dynamic compressor states $\bf{q}_c = \{q_c(0), \dots, q_c(T)\}\}$ that minimize the cumulative energy expenditure. Here, we require the shifted load demand trajectory satisfies
$$ \sum_{k=0}^\tau \qs(k) \geq \sum_{k=0}^\tau \qin(k), \ \forall \tau \in [0,T],$$
where the provided cooling exceeds the refrigeration load required for any horizon $[0,\tau]$ with $\tau \in \{0, \dots, T\}.$  Accordingly, our new optimization takes on the following form:
\begin{equation}
\label{opt-dcs}
\begin{aligned}
J^*(\bf{\qin})  &= \min_{\bf{\qs}, \{\bf{q}_c\}} \ \ \frac{1}{T} \sum_{k=0}^{T} \sum_{c \in \C} P_c(q_c(k)) \\
 \text{s.t. }  q_c(k) &\in Q_c \ \text{ for all } c \in \C, k \in [0,T],  \\
  \sum_{k=0}^\tau \qs(k) &\geq \sum_{k=0}^\tau \qin(k) \ \text{ for all } \tau \in [0,T],  \\
   \sum_{c \in \C} q_c(k) &\geq \qs(k) \ \text{ for all } k \in [0,T].
\end{aligned}
\end{equation}

This optimization has two sets of decision variables: the shifted thermal load trajectory $\bf{\qs}$ and the dynamic compressor loads $\{\bf{q}_c\}$. The first constraint ensures that the compressors loads are viable for every stage $k$. The second constraint dictates that that the shifted load trajectory $\bf{\qs}$ needs to deliver at least as much cooling as any nominal thermal profile $\bf{\qin}$ for any horizon $\tau \in [0,T]$. The last constraint ensures that the total compressor load needs to match the shifted thermal load $\bf{\qs}$ at each stage $k$. The optimization problem in Eq. \eqref{opt-dcs} is determined through $\{\bf{q}_c\}$ and $\bf{\qs}$, whereas the static compressor sequencing problem in Eq. \eqref{opt-cs} fixes $\qs(k) = \qin(k)$ for all $k$. While this results in a seemingly more complex optimization problem, the following proposition demonstrates that the optimal solution is actually attained by a fixed order water-filling algorithm.

\begin{prop}
\label{prop:load}
Let $\bf{\qin}$ be the dynamic refrigeration load. Furthermore, consider a set of compressors $\C = \{c_1, \dots, c_m\}$ that are ordered in terms of increasing marginal costs of cooling at full capacity, i.e., if compressor $i$ comes before compressor $j$ in the order $\O^{\rm{sh}}$, we have that
\begin{equation}
\frac{\qmaxa{i}}{P_{i}(\qmaxa{i})} \leq \frac{\qmaxa{j}}{P_{j}(\qmaxa{j})}.
\end{equation}
Then the following optimization problem yields the same optimal cost in Eq. \eqref{opt-dcs} when $T \to \infty$:
\begin{equation}
\label{opt-dcs2}
\begin{aligned}
J^*(\bf{\qin}) &= \min_{\bf{\qs}} \ \ \frac{1}{T} \sum_{k=0}^{T} \sum_{c \in \C} P_c(q_c(k)) \\
\text{ s.t. } q_c(k) &= \pi_c^{\rm{FO}}(\qs(k)) \text{ for all } c \in \C, k \in [0,T], \\
\sum_{k=0}^\tau \qs(k) &\geq \sum_{k=0}^\tau \qin(k), \ \text{ for all } \tau \in [0,T]  
\end{aligned} 
\end{equation}
where $\pi_c^{\rm{FO}}(\cdot)$ comes from the water filling algorithm in Algorithm \ref{alg:water} with the above ordering $\O^{\rm{sh}}$.  
\end{prop}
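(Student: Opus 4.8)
The plan is to collapse both optimization problems onto the single decision variable $\{\qs(k)\}$ and then compare the resulting per-stage cost functions through their lower convex envelope. First I would note that for any fixed shifted trajectory $\{\qs(k)\}$, the inner minimization over the compressor loads in Eq.~\eqref{opt-dcs} decouples across stages, and at each stage $k$ it is exactly the static problem of Eq.~\eqref{opt-cs} with load $\qs(k)$; its value is therefore $J^*(\qs(k))$. Hence Eq.~\eqref{opt-dcs} equals $\min \frac{1}{T}\sum_{k} J^*(\qs(k))$ over trajectories obeying the cumulative constraint, while Eq.~\eqref{opt-dcs2} equals $\min \frac{1}{T}\sum_{k} J^{\rm{FO}}(\qs(k);\O^{\rm{sh}})$ over the same feasible set, where $J^{\rm{FO}}(q;\O):=\sum_{c\in\C} P_c(\pi_c^{\rm{FO}}(q;\O))$. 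Because Proposition~\ref{prop:noload} gives $J^*(q)=\min_{\O} J^{\rm{FO}}(q;\O)\le J^{\rm{FO}}(q;\O^{\rm{sh}})$ for every $q$, the easy inequality — the value of Eq.~\eqref{opt-dcs} is at most that of Eq.~\eqref{opt-dcs2} — is immediate.

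The structural core is a lemma about the order $\O^{\rm{sh}}$. Let $g$ be the lower convex envelope of $J^*$ on $[0,\sum_{c}\qmax]$. Since each $P_c$ is concave with $P_c(0)=0$, power-per-unit-cooling is minimized at full capacity, so the vertices of $g$ are the stacked full-capacity points $\left(S_j,\ \sum_{i\le j}P_{c_i}(\qmaxa{c_i})\right)$ with $S_j:=\sum_{i\le j}\qmaxa{c_i}$, where compressors are appended in order of increasing marginal cost of cooling at full capacity — exactly $\O^{\rm{sh}}$. I would then check that under this order the water filling of Algorithm~\ref{alg:water} at load $S_j$ returns precisely $c_1,\dots,c_j$ at full capacity and the rest off, so that $J^{\rm{FO}}(S_j;\O^{\rm{sh}})=\sum_{i\le j}P_{c_i}(\qmaxa{c_i})=g(S_j)$. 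Combining $g\le J^*\le J^{\rm{FO}}(\cdot;\O^{\rm{sh}})$ with the equality at the breakpoints and the linearity of $g$ between consecutive vertices, a one-line squeeze shows that the lower convex envelope of $J^{\rm{FO}}(\cdot;\O^{\rm{sh}})$ coincides with $g$.

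With both per-stage costs sharing the envelope $g$, I would obtain the limiting equality by sandwiching both optimal values against $g$ evaluated at the mean load $\bar q:=\frac{1}{T}\sum_k \qin(k)$. For the lower bound, convexity of $g$ (Jensen), its monotonicity, $J^*\ge g$, and the cumulative constraint at $\tau=T$ give that the value of Eq.~\eqref{opt-dcs} is at least $g(\bar q)$. For the upper bound I would construct a trajectory feasible for Eq.~\eqref{opt-dcs2} that visits only the two envelope vertices $S_j\le\bar q\le S_{j+1}$ bracketing $\bar q$, time-shared in the exact proportions making its average equal $\bar q$; since $J^{\rm{FO}}=g$ at these vertices, the averaged cost equals $g(\bar q)$ up to an $o(1)$ term. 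Sandwiching then forces the value of Eq.~\eqref{opt-dcs2} to within $o(1)$ of that of Eq.~\eqref{opt-dcs}, which with the easy inequality yields equality as $T\to\infty$.

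The hard part is making the two-level time-shared construction respect the cumulative constraint $\sum_{k\le\tau}\qs(k)\ge\sum_{k\le\tau}\qin(k)$ for every $\tau$, not just $\tau=T$: a spiky $\qin$ can temporarily outpace any fixed two-level schedule. I would handle this by using full knowledge of the profile to front-load the higher-vertex stages (pre-cooling) and, at the rare stages where a transient would otherwise violate feasibility, momentarily promoting $\qs$ to a higher vertex. The crux is to bound the total cost of these corrections by $o(T)$ so that the averaged objective still converges to $g(\bar q)$; this is precisely where the infinite horizon and the one-sided (pre-cool-only) structure of load shifting do the work, and it is the step demanding the most care.
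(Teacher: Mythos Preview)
Your route is genuinely different from the paper's. The paper works directly on an optimizer $(\hat{\qs},\{\hat q_c\})$ of Eq.~\eqref{opt-dcs}: it first observes (as in Proposition~\ref{prop:noload}) that at each stage the optimal loads come from water filling under \emph{some} order $\O(k)$, and then argues by contradiction that $\O(k)=\O^{\rm{sh}}$ for all but finitely many $k$ --- if a less efficient compressor $j$ runs while a more efficient $i$ is idle at infinitely many stages, one can load-shift that cooling onto $i$ at full capacity over a suitable set of stages and strictly reduce cost. You instead collapse both problems onto the $\qs$-trajectory, identify the common lower convex envelope $g$ of the two per-stage costs (with vertices at the stacked full-capacity points $S_j$ taken in the $\O^{\rm{sh}}$ order), and sandwich both optimal values against $g(\bar q)$. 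The envelope lemma is correct and is a cleaner structural statement than anything the paper extracts.

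The concern is the sandwich itself. Pinning both values to $g(\bar q)$ proves strictly more than the proposition --- it identifies the limiting value, not merely equality of the two problems --- and this stronger claim carries a hidden hypothesis (that $\bar q=\lim_T \tfrac{1}{T}\sum_k \qin(k)$ exists) and pushes all the difficulty into the step you flag as hardest. When the incoming profile front-loads demand, the pre-cool-only constraint forces $\qs$ to track $\qin$ on those stages regardless of the target two-level schedule, so the ``promotions to a higher vertex'' need not be $o(T)$ without that extra hypothesis; in such regimes both optimal values can sit strictly above $g(\bar q)$ while still agreeing with each other, and your sandwich does not close. A repair that keeps your envelope insight but avoids the detour through $g(\bar q)$ is to compare the two problems directly: take an optimizer $\qs^*$ of Eq.~\eqref{opt-dcs} and, on any window where $\qs^*$ lies strictly between consecutive vertices $S_j<S_{j+1}$, replace it by a local two-level time-share with the same running sums (the bounded discrepancy is absorbed by one extra stage of pre-cool). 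Since $J^*\ge g$ everywhere and $J^{\rm{FO}}(\cdot;\O^{\rm{sh}})=g$ at the vertices, this yields a feasible schedule for Eq.~\eqref{opt-dcs2} with no larger limiting cost, which together with your easy inequality gives the equality without ever invoking $\bar q$.
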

\begin{proof}
Let $\bf{\hat{\qs}}, \{\bf{\hat{q}}_c\}$ be the optimal solution for Eq. \eqref{opt-dcs}. As $P_c$ is monotonic, we note that $\sum_{c \in \C} \hat{q}_c(k) = \hat{\qs}(k)$ must be hold with equality for all $k$ in the optimal solution. Then, for a given $\hat{\qs}(k)$, the optimal compressor loads $\{\hat{q}_c(k)\}$ can be given through the water filling algorithm in Algorithm \ref{alg:water} for each $k$ for some order $\O(k)$. This can be shown with arguments similar to the proof of Proposition \ref{prop:noload}.

Now we show that $\O(k) = \O^{\rm{sh}}$ for all $k$. Let $i$ and $j$ be compressors such that $i$ comes before $j$ in $\O^{\rm{sh}}$ but $j$ comes before $i$ in $\O(k)$. We claim that this is only possible a finite number of times. If not, there are an infinite number of times where $\hat{q}_i(k) = 0$, but $\hat{q}_j(k) > 0$. However, there exist a load shift and time points $\{k_{\ell}\}_{1 \leq \ell \leq N}$ time steps in which $q_i(k_\ell) \to q \leq q_i^+$ for $1 \leq \ell \leq M < N$ and $q_j(k_\ell) \to 0$ for $M < \ell \leq N$ which produces a more efficient solution, since the marginal cost of cooling for compressor $i$ is less than compressor $j$. Thus the water filling algorithm with order $\O^{\rm{sh}}$ recovers an optimal solution to Eq. \eqref{opt-dcs} when $T \to \infty$.
\end{proof}

From Proposition \ref{prop:load}, we see that a simple fixed order water-filling algorithm achieves the optimal solution to Eq. \eqref{opt-dcs} given the correct shifted load $\bf{\qs}$. We can also evaluate the potential cost benefits between compressor sequencing with and without load shifting. We characterize the greatest possible difference in cost in the next proposition. For notational ease, we define the ratios $R_{\max} = \max_c P_c(\qmin)/\qmin$ and $R_{\min} = \min_c P_c(\qmax)/\qmax$.

\begin{prop}
\label{prop:comp}
For a given dynamic refrigeration load $\bf{\qin}$, let $J^{\rm{cs}}(\bf{\qin})$ be the trajectory cost in Eq. \eqref{opt-dcs} associated with compressor sequencing without load shifting and let $J^*(\bf{\qin})$ be the optimal trajectory cost with load shifting with $T \to \infty$. The fractional difference between the costs is upper and lower bounded by
\begin{equation}
0 \leq \frac{J^{\rm{cs}}(\bf{\qin}) - J^*(\bf{\qin})}{J^*(\bf{\qin})} \leq \frac{R_{\max} - R_{\min}}{R_{\min}}
\end{equation}
\end{prop}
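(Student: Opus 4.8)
The plan is to sandwich both costs between multiples of the total heat that must be delivered, $\qtot = \sum_{k=0}^{T}\qin(k)$, using the efficiency ratios $R_{\min}$ and $R_{\max}$. The lower bound $0 \le (J^{\rm{cs}} - J^*)/J^*$ is immediate from a nesting-of-feasible-sets argument: fixing $\qs(k)=\qin(k)$ is an admissible choice in the load-shifting program \eqref{opt-dcs}, since it satisfies the cumulative constraint $\sum_{k=0}^{\tau}\qs(k)\ge\sum_{k=0}^{\tau}\qin(k)$ with equality for every $\tau$. Hence $J^{\rm{cs}}$ is the value of \eqref{opt-dcs} restricted to this one choice of trajectory, while $J^*$ is the unrestricted minimum, so $J^*\le J^{\rm{cs}}$ and the left inequality follows.

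For the right inequality, the first step is a pointwise efficiency lemma: for every compressor $c$ and every admissible load $q_c\in[\qmin,\qmax]$, one has $R_{\min}\,q_c \le P_c(q_c) \le R_{\max}\,q_c$. This holds because $P_c$ is concave with $P_c(0)=0$, which makes $P_c(q)/q$ nonincreasing on $[\qmin,\qmax]$, so it lies between $P_c(\qmax)/\qmax$ and $P_c(\qmin)/\qmin$; taking the global minimum and maximum over $c$ in the definitions of $R_{\min}$ and $R_{\max}$ gives the claim (and also shows $R_{\min}\le R_{\max}$, so the stated bound is nonnegative). Applying the left half of this lemma to the optimal load-shifting solution, together with the constraints $\sum_{c}q_c(k)\ge\qs(k)$ and $\sum_k\qs(k)\ge\qtot$, yields $J^*\ge (R_{\min}/T)\,\qtot$. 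Applying the right half to the per-stage static optima that define $J^{\rm{cs}}$ would give $J^{\rm{cs}}\le (R_{\max}/T)\,\qtot$, and dividing the two bounds produces $J^{\rm{cs}}/J^*\le R_{\max}/R_{\min}$, which rearranges to exactly the claimed right inequality $(R_{\max}-R_{\min})/R_{\min}$.

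The main obstacle is precisely this second bound $J^{\rm{cs}}\le (R_{\max}/T)\,\qtot$. The lemma only controls power by $R_{\max}$ times the heat \emph{actually delivered}, whereas the argument needs it in terms of the \emph{demanded} heat $\qtot$. Because of the minimum-capacity constraints, the static-optimal policy can be forced to overshoot the demand whenever $\qin(k)$ falls in a gap between achievable on/off configurations, and this overshoot is exactly what can separate $J^{\rm{cs}}$ from $R_{\max}\qtot/T$. I would first try to absorb the overshoot by invoking Proposition \ref{prop:noload} to localize it to a single trimmed compressor sitting at its minimum, and charging that bounded excess against the efficiency slack of the compressors running at full capacity. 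Where this is insufficient, I would restrict attention to load profiles whose values avoid the capacity gaps, so that the per-stage optimum of \eqref{opt-cs} serves the demand exactly and the bound becomes clean.

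In short, reconciling delivered heat against demanded heat across the two regimes is the crux of the proof; the lower bound and the efficiency lemma are routine, and once the no-shift delivered heat is pinned to $\qtot$, the remaining manipulations reduce to the two divisions above. I would therefore spend essentially all of the effort on establishing that the static-per-stage policy's overshoot is negligible in the regime of interest, and present the $R_{\min}$/$R_{\max}$ sandwich as the unifying device tying the two costs together.
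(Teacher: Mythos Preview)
Your lower bound is identical to the paper's: both note that fixing $\mathbf{q}^{\rm sh}=\mathbf{q}^{\rm in}$ is feasible in \eqref{opt-dcs}, so $J^*\le J^{\rm cs}$. For the upper bound, however, you and the paper take genuinely different routes. The paper does not attempt a uniform sandwich of $J^{\rm cs}$ and $J^*$ between multiples of $\sum_k q^{\rm in}(k)$; instead it exhibits an explicit extremal pair of profiles---one serving the same total heat at minimum capacity late in the horizon, the other serving it at full capacity early---computes the ratio $R_{\max}/R_{\min}$ for that pair, and then asserts (via concavity of $P_c$) that no other profile can do worse. In other words, the paper's argument is \emph{tightness by construction} followed by a one-line claim that the construction is extremal, whereas your argument is a direct \emph{validity bound} via the efficiency lemma $R_{\min}\,q_c\le P_c(q_c)\le R_{\max}\,q_c$.

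Your self-diagnosed obstacle---that the static per-stage solution may overshoot the demand because of minimum-capacity gaps, so that $J^{\rm cs}$ is controlled by $R_{\max}$ times \emph{delivered} heat rather than \emph{demanded} heat---is real, and it is worth noting that the paper's proof does not address it explicitly either: the extremal profiles it builds have $q^{\rm in}(k)\in\{0,q_c^-\}$, so no overshoot occurs there, and the step from ``this example attains the ratio'' to ``this is the worst case over all $\mathbf{q}^{\rm in}$'' is left to the reader. Your sandwich approach would, if the overshoot term can be shown to vanish as $T\to\infty$ (or under the implicit standing assumption $\min_c q_c^-\le q^{\rm in}(k)$ so that achievable totals are dense enough), give a cleaner proof that the bound \emph{holds}; the paper's construction is what supplies that the bound is \emph{sharp}. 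The two arguments are thus complementary rather than competing.
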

\begin{proof}
We first note that in Eq. \eqref{opt-dcs}, we recover the optimization without load shifting (or Eq. \eqref{opt-cs}) if we impose the constraint $\bf{\qs} = \bf{\qin}$ directly. Therefore, we have that $J^*(\bf{\qin}) \leq J^{\rm{cs}}(\bf{\qin})$ necessarily, and the ratio is always non-negative. We first show the upper bound when considering one compressor; the ratios simplify to $R_{\min} = P_c(\qmin)/\qmin$ and $R_{\max} = P_c(\qmax)/\qmax$. Consider two thermal load profiles $\bf{q}^1$ and $\bf{q}^2$ of length $T \gg 0$, where $q^1(k) = \qmin$ for $T - \rm{round}(D \cdot \qmax) \leq k < T$ and $0$ elsewhere and $q^2(k) = \qmax$ for $0 \leq k <\rm{round}(D \cdot \qmin) $ and $0$ elsewhere. We choose $D$ and $T$ large enough that $\sum_k q^1(k)$ is approximately equal to $\sum_k q^2(k)$ and $\rm{round}(D \cdot \qmin) < T - \rm{round}(D \cdot \qmax)$. By our definitions, $\bf{q}^2$ is a load shifted version of $\bf{q}^1$, in that it satisfies $\sum^\tau_{k=0} q^2(k) \geq \sum^\tau_{k=0} q^1(k)$ for all $\tau \in [0, T]$. Thus without load shifting for $\bf{q}^1$, the power usage is approximately $P_c(\qmin) \times D \cdot \qmax$ and with load shifting to $\bf{q}^2$, the power usage is approximately $P_c(\qmax) \times D \cdot \qmin$. Thus the ratio of power use matches the upper bound $(R_{\max} - R_{\min})/R_{\min}$. It can be easily verified that this example attains the worst case ratio via convexity of $P_c$. Extension to multiple compressors follows an analogous argument. A similar construction can be assembled, where in $\bf{q}^2$, the most efficient compressor satisfies the required load at full capacity and in $\bf{q}^1$, the least efficient compressor satisfies the required load at the minimum capacity.
\end{proof}

Proposition \ref{prop:comp} provides a possible range of energy savings when utilizing load shifting. By the worst-case constructions of $\bf{\qin}$, we see that if the profile $\bf{\qin}$ fluctuates significantly, implementing load shifting can generate the most energy savings; however, if $\qin$ is relatively constant, the energy savings may be less. For the compressors operating in Butterball, the fractional difference is between $0 \leq \frac{J^{\rm{cs}}(\bf{\qin}) - J^*(\bf{\qin})}{J^*(\bf{\qin})} \leq 8.85$, suggesting significant energy savings through load shifting. We validate this in the next section.

\begin{figure*}[t!]
    \centering
    \includegraphics[width=500pt]{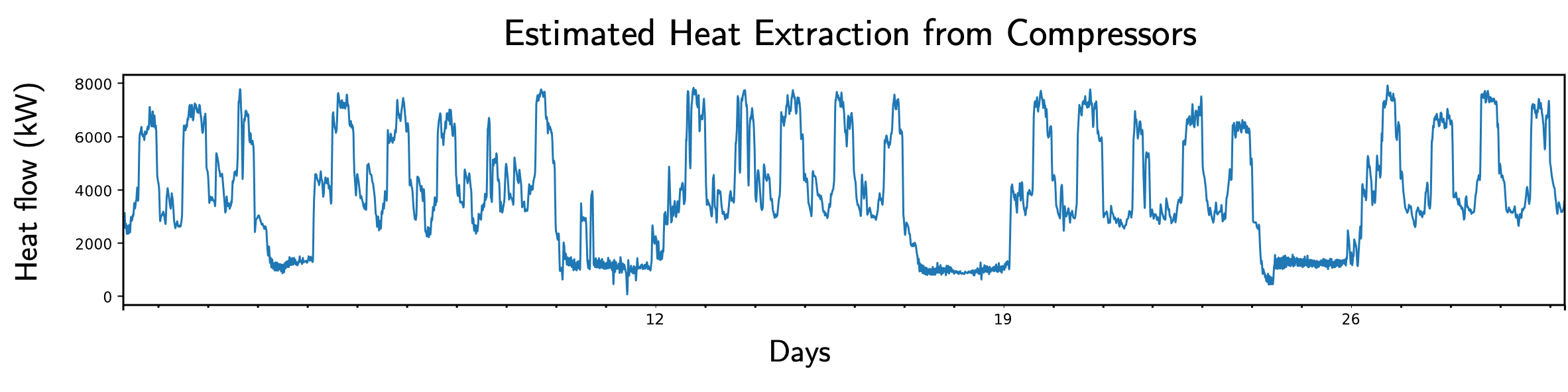}
    \caption{Over a month, we display the estimated refrigeration load serviced by the compressors, where each data point is associated with every minute in June 2023. In this figure, we also apply an average filter of 20 min. to smooth out the data. Data points for refrigeration load were calculated through estimates of compressor power and COP through sensor readings for each time point.}
    \label{fig:heat}
\end{figure*}

\section{The Butterball Facility: A Case Study on Opportunities in Compressor Sequencing}
\label{sec:case}

In this section, we evaluate the potential energy savings possible through compressor sequencing and load shifting on a case study of the Butterball facility.  Here, we use the predicted refrigeration load profile from the Butterball facility during the month of June 2023, depicted in Figure \ref{fig:heat}, to serve as a prototypical example for our case study. This profile was estimated using direct measurements of power usage of the compressors, as well derived coefficient of performance or COP (sometimes CP or CoP) of the refrigeration system, which is defined as the ratio of useful cooling to work done by the compressors. We observe that the thermal profile is relatively cyclic, where the peaks correspond to working hours during the work week (when new product requiring cooling typically enters during standard operating hours), and the lower plateaus represent weekends (representing off hours of the facility).

In the Butterball facility, all compressors in operation are screw compressors, which use a screw thread to trap and compress a volume of gas. For screw compressors, the main control parameter to modulate the capacity is a continuous slide valve control.\footnote{It is also sometimes possible to modulate the capacity of a screw compressor through speed control with a variable speed drive. While this paper does not explicitly consider this method, our analysis extends to this scenario as well.} Increasing the slide valve will expose more of the screw thread, increasing the volume of gas to be compressed and increases the capacity of the compressor. As highlighted previously, compressors operate at their highest efficiency when running at full capacity. The slide valve being at the minimum position corresponds to the minimum heat capacity $\qmin$ (and similarly for $\qmax$) for compressor $c$.
 
The main energy savings studied in this paper is summarized in Table \ref{table:main}. First, we see that there are significant energy savings if good fixed orders are used. When using only optimal compressor sequencing, the possible energy savings are up to only $5\%$ as compared to the best fixed order. However, when utilizing compressor sequencing with load shifting, we see energy savings for up to $20\%$. We also compare the performance of a simple, online version of a compressor sequencing algorithm with load shifting, given in Algorithm \ref{alg:online}. In this online algorithm, the compressors service either the mean refrigeration or the required refrigeration load at full capacity. We see that this simple implementation achieves similar energy savings to the optimal compressor sequencing with load shifting. Thus we validate load shifting as a viable mechanism for garnering energy savings with regards to compressor sequencing. In future work, we will extend these results to construct realizable control algorithms to offer actualized energy savings.

\begin{algorithm}
\caption{Online Load-Shifting Algorithm}\label{alg:online}
\begin{algorithmic}
\Require $\bf{\qin}$, $(k, \qk, \qtot) \gets (0, 0, 0)$, $\bf{q}_c \gets \bf{0}$ for all $c \in \C$
\For{$k \leq T$}
\State $\qk \gets \qk + \qin(k)$
\State $\qtot \gets \max\{\qk, \rm{mean}(\bf{\qin})\}$
\For{$c$ in $\O^{\rm{sh}}$}
\If{$\qtot > 0$}
\State $q_c \gets \qmax$
\State $\qtot \gets \qtot - \qmax$
\State $\qk \gets \qk - \qmax$
\EndIf
\EndFor
\State $k \gets k + 1$
\EndFor
\end{algorithmic}
\end{algorithm}

\begin{table}[h!]
\centering
\begin{tabular}{ | r | l |  }
\hline 
 Methodology & Average Power  \\
 \hline 
 Worst Fixed Order & $856.7$ kW  \\  
 Best Fixed Order & $562.3$ kW \\ 
 Compressor Sequencing (C.S) & $551.0$ kW \\ 
 Online C.S with Load Shifting &  $444.3$ kW\\
 C.S with Load Shifting & $443.5$ kW \\ 
 \hline 
\end{tabular}
\caption{Cost of Algorithms}
\label{table:main}
\end{table}

\section{Conclusion}
\label{sec:conc}

Improving the energy efficiency of industrial refrigeration systems has the potential to have significant impacts on energy usage. This work focuses on optimizing the efficiency of the compressors of the refrigeration cycle, as the compressors comprise the largest energy demand of the system. Specifically, we focus on the problem of compressor sequencing, where the operating conditions of the compressors are optimized given a thermal load requirement. We first study the compressor sequencing problem and characterize the structure of the optimal solution as a variant of a water-filling algorithm. However, deriving the optimal solution is computationally hard in general. Thus we utilize the mechanism of load shifting, where we utilize pre-cooling, to derive more efficient compressor sequencing algorithms. We observe that load shifting promotes straight-forward compressor sequencing algorithms and allows for additional energy savings. We validate our results on an operational Butterball facility. Through sensor data, we see that load shifting can potentially reduce energy use to up to $20\%$. In the future, we will also explore possible extensions, such as incorporating more realistic heat flow models, tackling load shifting through the lens of inventory control, exploiting dynamic energy prices to inform precooling, and derive stable implementations to be employed on facilities.

\bibliographystyle{ieeetr}
\bibliography{references.bib}

\begin{thebibliography}{10}

\bibitem{sun2022comprehensive}
Z.~Sun {\em et~al.}, ``Comprehensive performance analysis of cascade refrigeration system with two-stage compression for industrial refrigeration,'' {\em Case Studies in Thermal Engineering}, vol.~39, p.~102400, 2022.

\bibitem{fabrega2010exergetic}
F.~Fabrega, J.~Rossi, and J.~d'Angelo, ``Exergetic analysis of the refrigeration system in ethylene and propylene production process,'' {\em Energy}, vol.~35, no.~3, pp.~1224--1231, 2010.

\bibitem{stoecker1998industrial}
W.~F. Stoecker, {\em Industrial refrigeration handbook}.
\newblock McGraw-Hill Education, 1998.

\bibitem{eiareport}
``Manufacturing energy consumption survey,'' tech. rep., US Energy Information Administration, 2018.

\bibitem{sun2013peak}
Y.~Sun, S.~Wang, F.~Xiao, and D.~Gao, ``Peak load shifting control using different cold thermal energy storage facilities in commercial buildings: A review,'' {\em Energy conversion and management}, vol.~71, pp.~101--114, 2013.

\bibitem{braun2003load}
J.~E. Braun, ``Load control using building thermal mass,'' {\em J. Sol. Energy Eng.}, vol.~125, no.~3, pp.~292--301, 2003.

\bibitem{afram2014theory}
A.~Afram and F.~Janabi-Sharifi, ``Theory and applications of hvac control systems--a review of model predictive control (mpc),'' {\em Building and Environment}, vol.~72, pp.~343--355, 2014.

\bibitem{yao2021state}
Y.~Yao and D.~K. Shekhar, ``State of the art review on model predictive control (mpc) in heating ventilation and air-conditioning (hvac) field,'' {\em Building and Environment}, vol.~200, p.~107952, 2021.

\bibitem{pattison2016optimal}
R.~Pattison, C.~Touretzky, T.~Johansson, I.~Harjunkoski, and M.~Baldea, ``Optimal process operations in fast-changing electricity markets: framework for scheduling with low-order dynamic models and an air separation application,'' {\em Industrial \& Engineering Chemistry Research}, vol.~55, pp.~4562--4584, 2016.

\bibitem{pattison2017moving}
R.~Pattison, C.~Touretzky, I.~Harjunkoski, and M.~Baldea, ``Moving horizon closed‐loop production scheduling using dynamic process models,'' {\em AIChE Journal}, vol.~63, pp.~639--651, 2017.

\bibitem{vishwanath2019iot}
A.~Vishwanath, V.~Chandan, and K.~Saurav, ``An iot-based data driven precooling solution for electricity cost savings in commercial buildings,'' {\em IEEE Internet of Things Journal}, vol.~6, no.~5, pp.~7337--7347, 2019.

\bibitem{larsen2003control}
L.~S. Larsen, C.~Thybo, J.~Stoustrup, and H.~Rasmussen, ``Control methods utilizing energy optimizing schemes in refrigeration systems,'' in {\em 2003 European Control Conference (ECC)}, pp.~1973--1977, IEEE, 2003.

\bibitem{manske2000performance}
K.~A. Manske, ``Performance optimization of industrial refrigeration systems,'' 2000.

\bibitem{larsen2006model}
L.~F.~S. Larsen, {\em Model based control of refrigeration systems}.
\newblock Department of Control Engineering, Aalborg University, 2006.

\bibitem{reindl2013sequencing}
D.~T. Reindl {\em et~al.}, ``Sequencing \& control of compressors,'' {\em ASHRAE Journal}, vol.~55, no.~11, p.~14, 2013.

\bibitem{manske2001evaporative}
K.~Manske, D.~Reindl, and S.~Klein, ``Evaporative condenser control in industrial refrigeration systems,'' {\em International journal of refrigeration}, vol.~24, no.~7, pp.~676--691, 2001.

\end{thebibliography}
\end{document}